\def \bF {\pmb{F}}
\def \bD {\pmb{D}}
\def \bu {\pmb{u}}
\def \bx {\pmb{x}}
\def \bX {\pmb{X}}
\def \bs {\pmb{s}}
\newtheorem{remark}{Remark}
\newtheorem{proposition}{Proposition}
\renewcommand{\fnum@figure}{FIG. \thefigure}
\providecommand{\@LN}[2]{}
\DeclareMathOperator{\sign}{sign}
\begin{document}




\title{The efficiency of synchronization dynamics and the role of network syncreactivity}

\author{Amirhossein Nazerian}
\affiliation{Mechanical Engineering Department, University of New Mexico, Albuquerque, NM 87131, USA}
\author{Joseph D Hart}
\affiliation{US Naval Research Laboratory, Code 5675, Washington, DC 20375, USA}
\author{Matteo Lodi}
\affiliation{DITEN, University of Genoa, Via Opera Pia 11a, I-16145, Genova, Italy}
\author{Francesco Sorrentino$^1$ \\ Email: \href{mailto:fsorrent@unm.edu}{fsorrent@unm.edu}}
\begin{abstract}
{Synchronization of coupled oscillators is a fundamental process in both natural and artificial networks. While much work has investigated the asymptotic stability of the synchronous solution, the fundamental question of the transient behavior toward synchronization has received far less attention. In this work, we present the transverse reactivity as a metric to quantify the instantaneous rate of growth or decay of desynchronizing perturbations. 
We first use the transverse reactivity to design a coupling-efficient and energy-efficient synchronization strategy that involves varying the coupling strength dynamically according to the current state of the system. We find that our synchronization strategy is able to synchronize networks in both simulation and experiment over a significantly larger (often by orders of magnitude) range of coupling strengths than is possible when the coupling strength is constant. Then, we characterize the effects of network topology on the transient dynamics towards synchronization by introducing the concept of network syncreactivity: A network with a larger syncreactivity has a larger transverse reactivity at every point on the synchronization manifold, independent of the oscillator dynamics. We classify real-world examples of complex networks in terms of their syncreactivity.}

\end{abstract}
\maketitle

\section{Introduction}

The synchronization of networks of coupled oscillators has been the subject of intensive investigation, see e.g.,  
\cite{arenas2008synchronization,Report,wu2007synchronization,suykens2008introduction}.  Compared to the analysis of stability of the synchronous solution, the question of the efficiency of the synchronization dynamics has received less attention \cite{locatelli2015efficient,moujahid2011efficient,zhang2020energy,urazhdin2010fractional,murali1993transmission,demidov2014synchronization}.
 However, all biological and technological systems must operate efficiently.
In addition, these systems do not typically communicate at all times but often
interact in a state-dependent fashion.
For example, neurons in the brain transmit signals to other neurons after they `fire' \cite{izhikevich2007dynamical}, and  similar activation mechanisms have been found to describe
interactions among fireflies synchronizing their flashing 
\cite{mccrea2022model,ramirez2019modeling} and the way pacemaker cells in the heart affect surrounding cells 
via short action potentials separated by long depolarization bouts  \cite{verkerk2007single,dokos1996ion}.
{
Our goal in this paper is to design a synchronization strategy that is coupling-efficient and energy-efficient; 
i.e., 
it requires lower coupling strength on average and lower synchronization energy in comparison to the case of constant coupling.
}
We show how coupling-efficiency and energy-efficiency of the synchronization dynamics can be achieved by a 
strategy 
which uses coupling only when needed, where the coupling strength 
is varied based on the specific regions of the attractor on which the synchronous solution evolves. 
In particular, we identify a property, the transverse reactivity of different points on the attractor, based on which we adjust the coupling strength.


Fundamental works have characterized the asymptotic stability of the synchronous solution {by exploiting different tools. Algebraic conditions can be found for simple networks of phase oscillators \cite{kuramoto1975international,kuramoto1984chemical,sepulchre2005collective,ha2010complete}; these methods can be extended to more complex oscillators by using the phase resetting curve \cite{brown2004phase} and to its extension \cite{wilson2016isostable,wilson2020phase}. Another widely used tool to compute the asymptotic stability of the synchronous solution is the master stability function (MSF) \cite{SYNCBOOK,Report,Pe:Ca,Ba:Pe02}, which employs the Lyapunov exponent to evaluate the asymptotic rate of growth or decay of perturbations transverse to the synchronous solution.  
For a given choice of oscillator and coupling function, the MSF evaluates the asymptotic stability of the synchronous solution} in 4 steps: (i) describe the time evolution of perturbations of the network trajectory from the synchronous state, (ii) linearize the equations that describe the perturbations' time evolutions, (iii) separate the perturbations parallel to the synchronization manifold from the ones transverse to it, and  (iv) evaluate the rate of growth or decay of the transverse perturbations through the maximum transverse Lyapunov exponent.  The MSF is the function that calculates the maximum transverse Lyapunov as a function of the coupling strength and the network connectivity; therefore, it can be used to identify intervals of the coupling strength within which the synchronous solution is asymptotically stable for a given network topology. 

{Despite the abundance of work and tools in the study of the asymptotic stability of the synchronous solution, there is a lack of methods that study the transient dynamic toward synchronization.}

An important characterization of the transient dynamics of a system is given by the reactivity \cite{Farrell1996Generalized,Neubert1997ALTERNATIVES,Hennequin2012Nonnormal,Tang2014Reactivity,Biancalani2017Giant,Asllani2018Structure,MUOLO2019Patterns,Gudowska2020From,Lindmark2021Centrality,Duan2022Network,nazerian2023single,Nazerian2023Reactability}, which measures the instantaneous rate of growth or decay of the norm of the state vector. The reactivity 
can be thought of as an `instantaneous' finite time Lyapunov exponent \cite{pikovsky2016lyapunov}.  
However, the impact of the reactivity on the synchronization dynamics of complex networks of coupled oscillators is poorly understood. Indeed, while References \cite{Asllani2018Structure,Duan2022Network}  studied the effects of the reactivity on the stability of equilibrium points in networks of coupled dynamical systems, there has been no characterization of the reactivity for the general case of synchronization dynamics, in which all oscillators converge to a synchronous trajectory that evolves in time. 
In this work, we develop a general approach {that employs the MSF paradigm} to evaluate the transverse reactivity that can be applied to a broad variety of systems {of identical oscillators}, and we introduce the `syncreactivity' as an index of the reactivity of the synchronous solution that relates solely to the network topology. { A network with a greater syncreactivity has a larger transverse reactivity at every point on the synchronization manifold, independent of the oscillator dynamics. We show an important link between syncreactivity and normality of the network: Normal networks have minimal syncreactivity.}

{ 
Notice that the proposed method employs the MSF paradigm only to study the transient dynamics and therefore it can be used along with every tool able to analyze the asymptotic synchronization of a network (PRC and others).}

A surprising outcome of our work is that by adjusting the coupling strength according to the instantaneous transverse reactivity, we achieve synchronization, both numerically and experimentally, over intervals of the average coupling strength that are significantly broader than those predicted by the MSF analysis for constant \cite{Pe:Ca} or rapidly time-varying \cite{stilwell2006sufficient} coupling.
In particular, we show that we can significantly lower the minimum coupling strength needed for synchronization, and consequently, the energy expenditure required for synchronization.
In both natural and artificial networks, this has important benefits regarding the actuators that can be used to achieve synchronization, as these are typically limited in the duration and the overall intensity of the coupling they can exert.

Overall, we find that combining transient information provided by the transverse reactivity with traditional asymptotic stability analysis provides an exhaustive characterization of the synchronization dynamics in complex networks. As we will show, our work has broad applications which include linear consensus \cite{Consensus2004Olfati,olfati2007consensus,MIAO2016Collision,LIU2009Consensus,nazerian2023single}, nonlinear control of networked systems \cite{liu2011controllability,yan2015spectrum,PC,panahi2022pinning}, and the control of extreme events and dragon kings in noisy systems \cite{Cavalcante2013Predictability}.
 

\section{Results}

\subsection{Transient Synchronization dynamics and transverse reactivity}


\begin{figure*}[t]
    \centering
    \includegraphics[width=\linewidth]{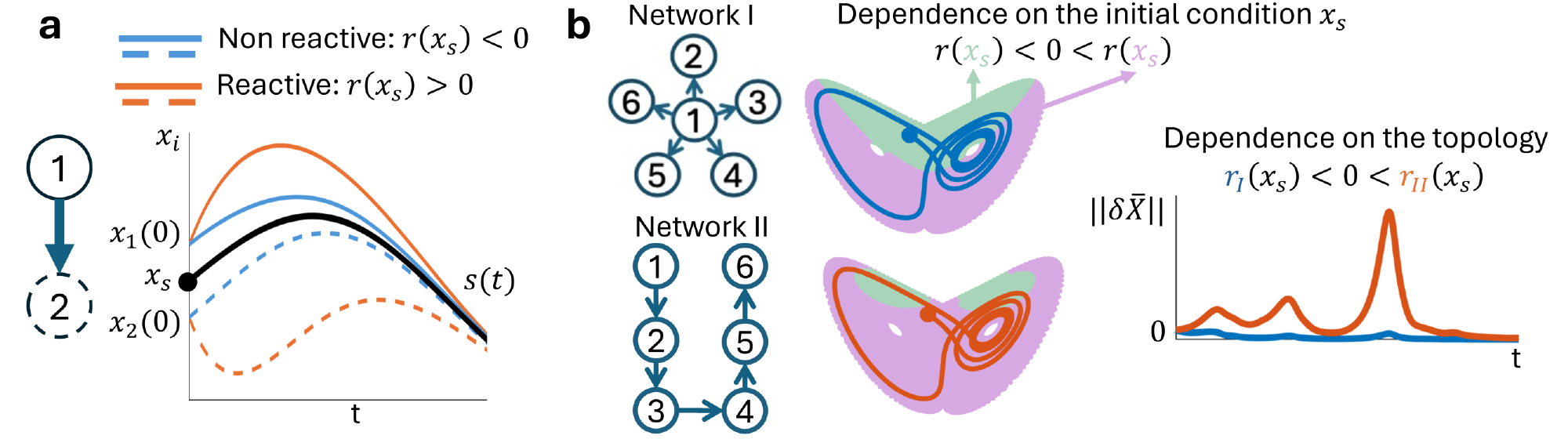}
    \caption{{
    \textbf{The reactivity of the synchronization dynamics of networks.} Panel \textbf{a} illustrates the effects of reactivity on the transient dynamics towards synchronization {of a simple 2-node network. For the set of initial conditions $[x_1(0),x_2(0)]$ close to a point $x_s$ (black dot) belonging to the synchronous solution $s(t)$ (black lines), a non-reactive coupling ($r(x_s) < 0$) results in a direct convergence to the synchronous state (blue lines), while a reactive coupling ($r(x_s) > 0$) results in an initial increase in the separation between trajectories before an eventual settling down to the synchronous state in the long term (orange lines).} Panel \textbf{b} presents two unweighted networks of {y-coupled Lorenz} oscillators {(see Methods \ref{sec:exmethods})} with two different topologies { (Network I and II). Reactive (non-reactive) points $x_s$ on the attractor $\mathcal{A}$ are depicted in lavender (green); therefore, the reactivity of green points is lower than the reactivity computed in lavender points.} {Based on the reactive points density in the attractor}, we can say that Network II is more `reactive' than  Network I.     {Indeed, we see the occurrence of jumps in the distance from the synchronous state $\| \delta \bar{\bX}(t) \|$ (and therefore in the transverse motion to the synchronization manifold)} for the more reactive network II (evolution in red), but not for network I (evolution in blue). }}
    \label{fig:main}
\end{figure*}

We consider {directed} networks of diffusively coupled homogeneous oscillators. 
The network is described by the adjacency matrix $A=[A_{ij}]$, where ${A_{ij}}\geq 0$ measures the strength of the directed coupling from node $j$ to node $i$ ($A_{ij}=0$ if there is no coupling from node $j$ to node $i$.) 
The equations that describe the {dynamics of the network nodes/oscillators are,}
\begin{equation} \label{eq:main}
    \dot{\bx}_i(t)=\bF(\bx_i(t))+\sigma \sum_{j=1}^N L_{ij}H \bx_j(t), \quad i=1,\hdots ,N,
\end{equation}
where {$N$ is the number of nodes/oscillators}, $\bx_i(t)$ is the $n$-dimensional state of node/oscillator $i$ at time $t$. The individual oscillator dynamics is given by $\bF(\bx_i(t))$, and the node-to-node coupling interaction is described by the symmetric and positive semidefinite matrix $H$.
{
The more general case that the node-to-node coupling function is nonlinear is studied in Supplementary Note 1.
}
The Laplacian matrix is denoted by $L=[L_{ij}]$ where {$L_{ij}= A_{ij} - \delta_{ij} \sum_k A_{ik}$}, and $\delta_{ij}$ is the Kronecker delta. By construction, $\sum_j^N L_{ij} = 0, \forall i$. The scalar $\sigma \geq 0$ is the coupling strength. {
We proceed under the assumption (which is required for the stability of the synchronous solution) that the network 
has a directed spanning tree \cite{wu2005algebraic}.
}
Then, the Laplacian matrix has the set of eigenvalues $\{\lambda_i\} $, of which $\lambda_1=0$ and all the others have negative real parts. Moreover, {$Re(\lambda_1) > Re(\lambda_2) \geq... \geq Re(\lambda_N)$}, where {$Re(\cdot)$} notation indicates real part.

Equation \eqref{eq:main} admits the synchronous solution $\bx_1(t)=\bx_2(t)=...=\bx_N(t)=\bs(t)$, {which obeys the dynamics of a single uncoupled system,}
\begin{equation} \label{eq:xss}
\dot{\bs}(t)=\bF(\bs(t)),
\end{equation}
which is independent of the coupling strength $\sigma$, the Laplacian matrix $L$, and the node-to-node coupling matrix $H$. We call $\mathcal{A}$ the attractor on which the dynamics \eqref{eq:xss} converges. {This may be a chaotic attractor.} 

{The transverse reactivity {$r$} measures the instantaneous rate of growth {($r>0$)} or decay {($r<0$)} of the norm of the state vector of desynchronizing perturbations, and can be thought of as an `instantaneous' finite-time Lyapunov exponent \cite{pikovsky2016lyapunov}. This relation to finite-time Lyapunov exponents is explained in detail in Supplementary Note 2.}
{A point $x_s$ on the synchronous solution $s(t)$ is said to be reactive (non-reactive) if $r(x_s) > 0$ ($r(x_s) < 0$).}
Figure \ref{fig:main} illustrates the concept of transverse reactivity. Panel \textbf{a} shows how the transverse reactivity affects the transient dynamics towards synchronization when synchronization is asymptotically stable. For a given system and set of initial conditions, a non-reactive coupling results in a direct convergence to the synchronous state, while a reactive coupling results in an initial increase in the separation between trajectories before an eventual settling down to the synchronous state in the long term.
Panel \textbf{b} presents two unweighted network topologies, one of which is more ‘reactive’ than the other. We consider a network of $y$-coupled Lorenz oscillators {(see Methods \ref{sec:exmethods} and Supplementary Note 3)} and color the attractor lavender (green) to indicate points for which the synchronous dynamics are reactive (non-reactive). It is noteworthy that the reactive part of the attractor is larger for the more reactive network than it is for the less reactive one. We compare the time evolutions of {the system defined on} the two networks starting from the same initial condition, which has different reactivities for the different network topologies. We plot the distance from the synchronous state $\| \delta \bar{\bX}(t) \|$ {(see Methods \ref{App:Isolating} and Supplementary Note 3)}  as a function of time. Although both networks eventually achieve synchronization, we see large peaks in the transient time evolution in the case of the more reactive network, while these are not seen in the case of the less-reactive topology. {Supplementary Note 3 provides further illustrations of the effects of reactivity on synchronization dynamics by showing how either the choice of the initial conditions or of the network topology affects the occurrence of initial surges in the norm of the motion transverse to the synchronization manifold.} 

We now provide {as a novel contribution of this paper} a precise definition of the transverse reactivity of a point on the attractor.
The 
transverse reactivity 
of the perturbations about $\bx_s$ on the synchronous solution is given by
\begin{equation} \label{eq:rtsimple}
    r(\bx_s) = e_1 \left(\frac{\bD\bF( {\bx}_s)+\bD\bF^\top( {\bx}_s)}{2} +\sigma \xi H \right),
\end{equation}
where operator {$e_1(\cdot)$ computes the largest eigenvalue, }$\bD\bF({\bx}_s)$ is the Jacobian of $\bF$ at ${\bx}_s$ and the quantity,
\begin{equation}
    \xi = e_1 \left( V^\top \frac{L + {L}^\top}{2} V \right)
\end{equation}
is often referred to as the algebraic connectivity of directed graphs \cite{wu2005algebraic}, as a generalization of the classical concept of algebraic connectivity for undirected graphs \cite{fiedler1973algebraic}.
The matrix $V \in \mathbb{R}^{N\times N-1}$ is an orthonormal basis for the null subspace of $[1 \ 1 \hdots 1] \in \mathbb{R}^{1 \times N}$, i.e., the matrix $V$ is any matrix with normal columns that are orthogonal to $[1 \ 1 \hdots 1]^\top \in \mathbb{R}^N$ and to each other.
See Methods \ref{App:Isolating} for detailed derivation {and discussion of the transverse reactivity $r(\bx_s)$.}

The mapping that associates to each point $\bx_s$ of the attractor $\mathcal{A}$ its transverse reactivity $r(\bx_s)$, defines the \textit{reactive characterization of the attractor $\mathcal{C}(\mathcal{A})$.}
 A sufficient condition for $\xi = Re(\lambda_2)$ is that the Laplacian matrix $L$ is normal.
It also follows that for all undirected 
networks, $\xi = \lambda_2$.

Supplementary Note 4 presents upper and lower bounds for $\xi$. 
In particular, we prove that $\xi \leq 0$ for minimally reactive networks \cite{nazerian2023single}, i.e., a class of networks for which the largest eigenvalue of the symmetric part of the Laplacian is zero.
These networks, also known as balanced networks, are such that the in-degree and the out-degrees of each node are the same.
Having a negative $\xi$ implies that when the coupling strength is increased, the transverse reactivity of the points on the attractor either decreases or remains constant. 

Supplementary Note 5 presents the values of the reactivity $r(\bx_s)$ over a few well-known chaotic attractors: Lorenz, Rossler, Chen, Forced Van Der Pol, the FitzHugh-Nagumo model, and the Hindmarsh–Rose model.

\color{black}



\subsection{Efficient Synchronization Dynamics}

\subsubsection{Reactivity-based coupling scheme} \label{sec:dynamics}


{As stated in the introduction, our goal is to achieve coupling-efficiency and energy-efficiency of the synchronization dynamics, by requiring} lower coupling strength on average and lower synchronization energy in comparison to the case of constant coupling.
To this end, we find that  a simple modification
to Eq.\ \eqref{eq:main},  in which the constant coupling strength $\sigma$ is replaced by a time-varying one $\sigma(t)$, can be extremely beneficial,
\begin{equation}
\label{eq:timevarying}
\dot{\bx}_i(t)=\bF(\bx_i(t))+\sigma (t)\sum_{j=1}^N L_{ij}H \bx_j(t), \quad i=1,\hdots ,N.
\end{equation}

We call $\bu_i(t)=\sigma (t)\sum_{j=1}^N L_{ij}H \bx_j(t)$ the synchronization input affecting node $i$ in Eq.\ \eqref{eq:timevarying} and 
\begin{equation} \label{eq:energy}
    \mathcal{E}= \dfrac{1}{N} \sum_{i=1}^N \dfrac{1}{t_f - t_0} \int_{t_0}^{t_f}  \| \bu_i(t) \| dt
\end{equation}
the synchronization energy, corresponding to a given choice of the coupling strength $\sigma(t)$ over the time interval $[t_0,t_f]$, where $t_0$ and $t_f>t_0$ are some preassigned times. Here, $\| \cdot \|$ is the 2-norm. {Our definition of synchronization energy is derived from signal processing 
where the power of the scalar signal $s(t)$ is equal to $s(t)^2$ and the energy of the signal is equal to the integral of the power over time, $ \int_{-\infty}^{+\infty} s(t)^2 dt$, see e.g. \cite{Rihaczek1968Signal}. We acknowledge here that based on the particular selection of the individual systems and of the node-to-node coupling function, one may consider other definitions of the synchronization energy that are system-specific.} 

{
Our work applies to both cases that the MSF is negative in an unbounded range or bounded range of its argument \cite{Report}. When the range is unbounded (often referred to as Class II of the MSF),
as we increase $\sigma$ from zero, there is only one transition from asynchrony to synchrony (A $\rightarrow$ S) at the critical value $\sigma^{A \rightarrow S}$. When the range is bounded (Class III of the MSF), as we increase $\sigma$ from zero, first there is a transition from asynchrony to synchrony (A $\rightarrow$ S) at the critical value  $\sigma^{A \rightarrow S}$ followed by another transition from synchrony to asynchrony (S $\rightarrow$ A) at the critical value $\sigma^{S \rightarrow A}>\sigma^{A \rightarrow S}$.}

First, we consider the case of a transition from asynchrony to synchrony (A $\rightarrow $ S transition), 
for which the condition for stability of the synchronous solution is that $\sigma>\sigma^{A \rightarrow S}$ (the latter is a function of $\lambda_2$).
We proceed under the assumption that for a given choice of $\bF$, $H$, $L$, and constant coupling strength $\bar{\sigma}<\sigma^{A \rightarrow S}$, the coupled oscillators in Eq.\,\eqref{eq:timevarying} will not synchronize. 
We aim to find a time-varying coupling strength $\sigma (t)$ such that a) the average coupling strength is
    $1/T\int_0^T \sigma (t) dt = \bar{\sigma}$,
where $T$ is the total time, and b) the coupled dynamical systems in Eq.\,\eqref{eq:timevarying} synchronize.
We propose the following simple strategy which we call `coupling when needed' (CWN),
\begin{equation} \label{eq:sigmat}
    \sigma (t) = \begin{cases}
    \Bar{\sigma} \dfrac{1-\gamma (1 - \tau)}{\tau}, \quad & r(\bar{\bx}(t)) > \beta \\
    \\
    \Bar{\sigma}\gamma , & r(\bar{\bx}(t)) \leq \beta \\
\end{cases}
\end{equation}
where $\gamma$ and $\beta$ are tunable parameters such that $0 \leq \gamma \leq 1$ and  $\beta_{\min} < \beta < \beta_{\max}$.
The average solution at time $t$ is $\bar{\bx} (t) = \frac{1}{N} \sum_{i=1}^N \bx_i(t)$, and $\beta_{\min} = \min_{\bx_s \in \mathcal{A}} r(\bx_s)$ and $\beta_{\max} = \max_{\bx_s \in \mathcal{A}} r(\bx_s)$. 
The transverse reactivity $r(\bar{\bx}(t))$ is evaluated at $\bar{\bx}(t)$ using Eq.\,\eqref{eq:rtsimple} with $\sigma = \bar{\sigma}$.
Here, the parameter $0 < \tau < 1$ is the fraction of the times when $r(\bar{\bx}(t)) > \beta$ and is a function of $\beta$.
A good approximation for $\tau$ may be calculated beforehand using a long enough pre-recorded synchronous solution $\bs(t)$, Eq.\,\eqref{eq:xss}, as
    $\tau = 1/2 + 1/2\Big\langle \sign \Big( r(\bs(t) ) - \beta \Big)  \Big\rangle_{t}$.
{
Here, $\langle \cdot \rangle_t$ denotes the time average over the interval $t$. The sign function returns $1, 0$, or $-1$ when the argument inside is positive, zero, or negative, respectively.
}
As long as the initial conditions of the connected systems are close to the synchronous solution, the above approximation of $\tau$ is sufficiently close to the actual probability that $r(\bar{\bx} (t)) > \beta$.

If $\gamma = 1$, then $\sigma(t) = \bar{\sigma}, \ \forall t$, so the time-varying coupling strategy simplifies to the constant coupling. {Otherwise, the CWN strategy returns a stronger (weaker) coupling strength $\sigma$ when the transverse reactivity is larger (lower) than the threshold $\beta$.} If $\gamma=0$, the CWN strategy becomes on-off, similar to the work of Refs.\ \cite{belykh2004blinking,so2008synchronization,chen2009reaching,buscarino2017synchronization,PARASTESH2019Synchronizability}. A comparison between our work and these references is found in Supplementary Note 6, which shows a strong advantage of our CWN approach.

We now discuss the other case of a transition from synchrony to asynchrony (S $\rightarrow$ A transition), for which the condition for stability of the synchronous solution is that $\sigma<\sigma^{S \rightarrow A}$ (the latter is a function of $\lambda_N$).
We consider that $\bar{\sigma}$ is greater than the critical coupling $\sigma^{S \rightarrow A}$ predicted by the MSF analysis.
Hence, the system of our interest in Eq.\,\eqref{eq:timevarying} will not synchronize if $\sigma (t) = \bar{\sigma}$.
Our CWN strategy for the case of an S $\rightarrow$ A transition is,
\begin{equation} \label{eq:sigmat2}
    \sigma (t) = \begin{cases}
    \Bar{\sigma}\alpha, \quad & r(\bar{\bx}(t)) > \beta \\
    \\
    \Bar{\sigma} \dfrac{1-\tau \alpha}{1-\tau} , & r(\bar{\bx}(t)) \leq \beta \\
\end{cases}
\end{equation}
where $0 \leq \alpha \leq 1$ and $\beta_{\min} < \beta < \beta_{\max}$ are tunable parameters. 
{Similar to the CWN startegy for A $\rightarrow$ S, $\alpha = 0$ ($\alpha = 1$) corresponds to the constant coupling (the on-off coupling.)}
See Methods \ref{sec:cwn} for detailed derivations of the CWN strategies for both A $\rightarrow$ S and S $\rightarrow$ A transitions.

Supplementary Note 7 shows how this framework can be applied to linear consensus dynamics.  

We note here that in the presence of noise or small parametric mismatches, approximate synchronization of the set of Eqs.\ \eqref{eq:main} is still possible, but large desynchronization bursts known as bubbles may occur \cite{ashwin1994bubbling}. While linear stability analysis does not predict these bubbles, we show in Supplementary Note 8 that the transverse reactivity is able to explain them and that they can be eliminated using our coupling scheme.

\subsubsection{Relation to prior results}
In this section we briefly summarize a few of the major results from the area of synchronization in time-varying networks in order to place our own results in context; A thorough review of such results can be found in Ref. \cite{ghosh2022synchronized}. Of primary importance is the Master Stability Function (MSF): The MSF is the maximum transverse Lyapunov as a function of the coupling strength $\sigma$ and the eigenvalues $\lambda_i$ of the network Laplacian matrix (often denoted $\Lambda_{max}(\sigma\lambda_i)$); therefore, it can be used to identify intervals of the coupling strength within which the synchronous solution is asymptotically stable for a given network topology \cite{Pe:Ca}.

For networks in which the coupling strength varies on a time scale much faster than the node dynamics, the stability of synchronization is determined by the MSF of the average coupling strength $\bar{\sigma}$ and the eigenvalues of the network Laplacian (i.e., $\Lambda_{max}(\bar{\sigma}\lambda_i)$ \cite{stilwell2006sufficient}. Indeed, this is why in the following we show the synchronization error as a function of the average coupling strength. As the following sections demonstrate, our CWN strategy allows for synchronization to occur for values of $\bar{\sigma}$ for which it would not be possible with fast switching. 

For networks that vary smoothly in time (independent of time scale) in a state-independent way and such that the adjacency matrices commute with each other, the stability of synchronization is determined by the following condition: For each eigenvector of the (time-varying) adjacency matrix associated with perturbations transverse to the synchronization manifold, the associated time averaged maximal Lyapunov exponent of the variational equation must be negative (i.e., $S_i=\lim_{T\to\infty}\frac{1}{T}\int_0^T\Lambda_{max}(\sigma(t)\lambda_i(t))dt<0$ for all $i$ that correspond to transverse perturbations) \cite{boccaletti2006synchronization}. While this result implies that it may be possible for a network with time-varying coupling strength to synchronize when it would not synchronize with a constant coupling strength of $\bar{\sigma}$, to our knowledge such a demonstration, even anecdotally, has never been achieved.

In this work, we present the first demonstration that through well-designed time-varying coupling strength synchronization can be obtained over intervals of the average coupling strength that are significantly broader than those predicted by the traditional MSF analysis. Further, we provide a systematic way, based on the transverse reactivity, to achieve this reduction in average coupling strength, leading to substantial gains in the efficiency of synchronization.

\color{black}

\subsubsection{Examples}

\begin{figure*}
    \centering
    \includegraphics[width=0.85\textwidth]{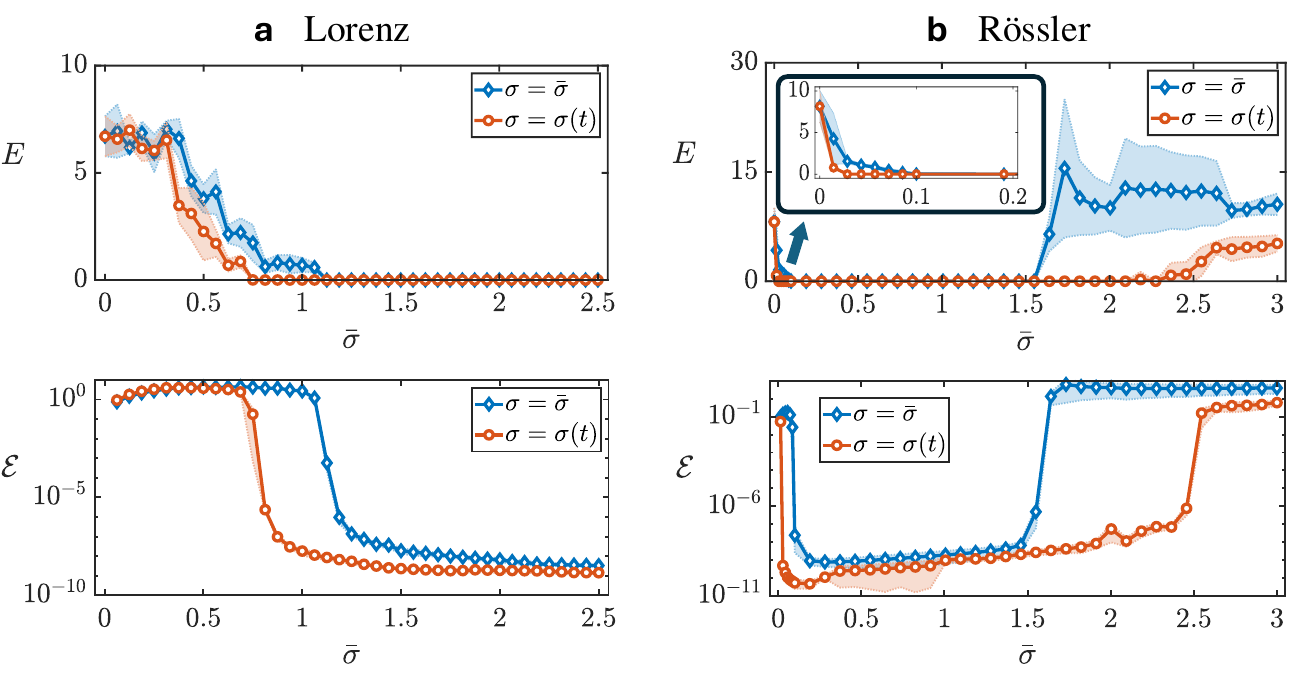}
    \caption{
    \textbf{The synchronization error $E$ (top) and the synchronization energy $\mathcal{E}$ (bottom) versus the average coupling strength $\Bar{\sigma}$.} Panel \textbf{a} shows the case of Lorenz systems. The parameters in Eq.\,\eqref{eq:sigmat} for $\sigma = \sigma (t)$ 
    $\beta = 0.5$ and $\gamma = 0.16$. Panel \textbf{b} shows the case of R{\"o}ssler oscillators. The parameters for $\sigma = \sigma (t)$ in Eqs.\,\eqref{eq:sigmat} and \eqref{eq:sigmat2} are $\beta = 0.2$ and $\gamma = \alpha = 0.01$. For $0 \leq \bar{\sigma} < 0.3$, we use Eq.\,\eqref{eq:sigmat}, and for $0.3 \leq \bar{\sigma} \leq 3$, we use Eq.\,\eqref{eq:sigmat2}.
    The data for both panels are averaged over 20 realizations initiated from randomly chosen initial conditions. The shaded backgrounds show the standard deviation of the plotted data.
    }
    \label{fig:sigmabar_compare}
\end{figure*}

We show the effectiveness of the CWN strategy in Eqs.\,\eqref{eq:sigmat} and \eqref{eq:sigmat2} through examples with coupled Lorenz oscillators and R{\"o}ssler oscillators.
 Other examples of different local dynamics such as the Hindmarsh-Rose neuron model, the FitzHugh-Nagumo neuron model, and the forced Van der Pol oscillator are presented in Supplementary Note 9. 

We define the synchronization error,
\begin{equation}
\label{eq:syncerror}
    E = \left< \frac{1}{N} \sum_{i=1}^N \| (\bx_i(t) - \Bar{\bx}(t)) \|  \right>_t.
\end{equation}
Here, $< \cdot >_t$ returns the average over the time interval $t \in [0.9 T \ \ T]$, and we set $T = 2000 \ s$.
The initial conditions for the oscillators are chosen randomly 
in a small neighborhood of the synchronous solution.
{
For the details of the dynamical function, coupling matrices, and the Laplacian matrix of this example, see Methods \ref{sec:exmethods}.
}

Figure \ref{fig:sigmabar_compare} shows the synchronization error for the oscillators when the coupling strategy in Eq.\,\eqref{eq:timevarying} is
\begin{enumerate}
    \item constant coupling $\sigma(t) = \Bar{\sigma}$
    \item time-varying coupling $\sigma(t)$ in either Eqs. \eqref{eq:sigmat} or \eqref{eq:sigmat2}.
\end{enumerate}

In the case of an A $\rightarrow$ S transition (S $\rightarrow$ A transition),
$\sigma = \sigma (t)$ from Eq. \eqref{eq:sigmat} (Eq. \eqref{eq:sigmat2}) is used.
Figure\,\ref{fig:sigmabar_compare} \textbf{a} shows the synchronization error $E$ as the average coupling strength $\bar{\sigma}$ is varied for a network of Lorenz oscillators.
Here, we set $\gamma = 0.16$ and $\beta = 0.5$ in Eq.\,\eqref{eq:sigmat}. {As an illustrative example, the reactive characterization of the Lorenz attractor with $\sigma\xi=-1$ is provided in Supplementary Fig. 2. For the A$\to$ S transition, the coupling is increased when $r(\bar{\bx} (t)) > \beta$.}

The time-varying coupling strategy $\sigma (t)$ successfully synchronizes the network with 
$\Bar{\sigma} = 0.75$ while the constant coupling strategy requires at least $\Bar{\sigma} = 1.12$.
This corresponds to a 
$33\%$ reduction of the critical average coupling.
The lower panel of Fig.\ \ref{fig:sigmabar_compare} \textbf{a} also shows that our proposed strategy corresponds to a substantial reduction in energy expenditure compared to the constant coupling strength case.
We conclude that our proposed strategy is capable of achieving both i) a reduction in the average coupling strength $\bar{\sigma}$ and ii) a reduction in the synchronization energy $\mathcal{E}$. 
See Supplementary Note 11 for an in-depth discussion on the energy efficiency of the strategy and how this scales with the average coupling strength $\bar{\sigma}$ for the case of connected Lorenz oscillators.

We now consider the case of R{\"o}ssler oscillators coupled in the $x$ variable and study separately the two transitions that are seen as the average coupling strength is increased: the A $\rightarrow$ S transition followed by the S $\rightarrow$ A transition.  {As an illustrative example, the reactive characterization of the R{\"o}ssler attractor with $\sigma\xi=-1$ is provided in Supplementary Fig. 2. For the A$\to$ S (S$\to$ A) transition, the coupling is increased (decreased) when $r(\bar{\bx} (t)) > \beta$.}
In the case of the A $\rightarrow$ S transition,  we set $\beta = 0.2$ and $\gamma = 0.01$ in Eq.\,\eqref{eq:sigmat} and in the case of the S $\rightarrow$ A transition, we set $\beta = 0.2$ and $\alpha = 0.01$ in Eq.\,\eqref{eq:sigmat2}.
Figure \ref{fig:sigmabar_compare} \textbf{b} (top) demonstrates a decrease of about 70\% of the critical average coupling when the time-varying coupling strategy is implemented for the A $\rightarrow$ S transition and an increase of about 70\% of the critical average coupling for the S $\rightarrow$ A transition.
Figure \ref{fig:sigmabar_compare} \textbf{b} (bottom) shows that by the use of time-varying coupling, the synchronization energy $\mathcal{E}$ is also reduced in comparison to constant coupling, which is seen over the entire range of $\Bar{\sigma}$ plotted in the figure.
We thus conclude that the time-varying coupling strategies in Eq.\,\eqref{eq:sigmat} and \eqref{eq:sigmat2} can be implemented successfully to significantly expand the range of the coupling strength in which the network synchronizes and to also reduce the synchronization energy $\mathcal{E}$.

{
In Supplementary Note 11, we have performed a synchronization energy comparison for the synchronization of coupled Lorenz oscillators from simulation data from Fig.\,\ref{fig:sigmabar_compare} \textbf{a}.
We did not see a significant difference in the synchronization error when the energy for the constant coupling and the CWN were the same.
However, for the same $\bar{\sigma}$, we saw that both energy and the synchronization error were higher in the case of constant coupling in comparison to the CWN.
}



We now study the effects of varying the two parameters $\beta$ and $\gamma$ in the synchronization strategy of Eq.\,\eqref{eq:sigmat} for the same system of coupled Lorenz oscillators studied in Fig.\,\ref{fig:sigmabar_compare}\textbf{a}. 
The MSF threshold for synchronization is $\sigma^{A \rightarrow S} Re(\lambda_2) \approx -2.3$ as reported in \cite{Pecora2009}.
Here, we wish to see how much smaller we can make $\bar{\sigma} Re(\lambda_2)$ than the MSF threshold and still observe synchronization.
To this end, we vary {$\gamma$ and $\beta$} in Eq.\,\eqref{eq:sigmat} and find the smallest 
\begin{equation*}
    \text{\% MSF threshold} \ A \rightarrow S = 100 {\dfrac{\bar{\sigma}}{\sigma^{A \rightarrow S} }.}
\end{equation*}
Figure\,\ref{fig:sigmabar} shows the \% MSF threshold $A \rightarrow S$ as $\gamma$ and $\beta$ are varied.
We see that the switching law in Eq.\,\eqref{eq:sigmat} can successfully synchronize the system for an average value of the coupling as low as 1\% of the critical coupling strength corresponding to the MSF threshold. 

In Supplementary Note 12, 
{we provide a similar example to what shown in Fig.\,\ref{fig:sigmabar} for the CWN strategy for S $\rightarrow$ A in Eq.\,\eqref{eq:sigmat2}.}
This strategy is applied to a network of coupled Lorenz oscillators and it is shown that the \% MSF threshold for an S $\rightarrow$ A transition can be increased up to five folds (530 \%).
This significant increase in the upper bound on $\bar{\sigma}$ demonstrates the effectiveness of the time-varying coupling strength in the case of an S $\rightarrow$ A transition.

{
As a final numerical example, we demonstrate that the synchronization scheme presented in Ref. \cite{Cavalcante2013Predictability} for the control of extreme events called dragon kings is a special case of our reactivity-based coupling scheme in Supplementary Note 8.} 

\begin{figure}
    \centering
\includegraphics[width=0.8\linewidth]{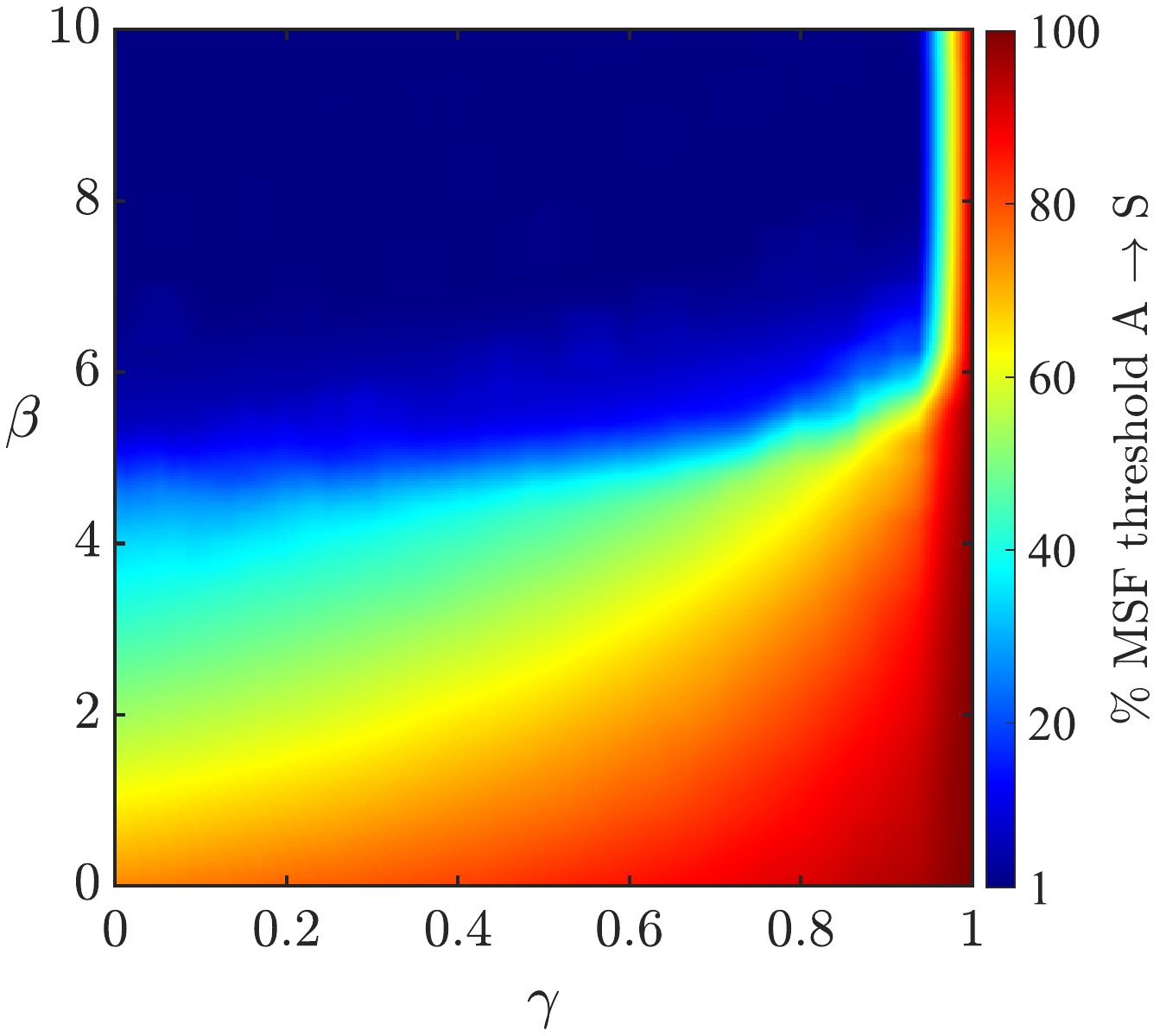}
\caption{\textbf{Effect of the parameter settings on the performance of the CWN strategy.} \% MSF threshold as the parameters $\gamma$ and $\beta$ in Eq.\,\eqref{eq:sigmat} are varied. {The details of these coupled Lorenz systems are presented in Methods \ref{sec:exmethods}.}
    }
    \label{fig:sigmabar}
\end{figure}

\subsubsection{Application to networks of opto-electronic oscillators}

We have demonstrated the efficacy of our time-varying coupling scheme in numerical simulations; however, networks in the real world are composed of non-identical oscillators and are subject to noise. Additionally, our coupling scheme relies on a model, which is bound to be imperfect. In this section, we test our reactivity-based coupling scheme on a network of two bi-directionally coupled, chaotic opto-electronic oscillators, and we find that our coupling scheme is robust in an experimental network.

The type of opto-electronic oscillator used here consists of a nonlinear, time-delayed feedback loop. These types of opto-electronic oscillators have found applications in areas such as communications \cite{argyris2005chaos}, microwave waveform generation \cite{maleki2011optoelectronic,hao2018breaking}, and photonic machine learning \cite{larger2012photonic}. A review of these devices can be found in Ref. \cite{chembo2019optoelectronic}.

A complete description of the opto-electronic oscillator experimental setup and coupling scheme is provided in Supplementary Note 13. A model for the dynamics of our opto-electronic network has been developed in previous work \cite{murphy2010complex}:
\begin{equation} \label{eq:OEO}
\begin{split}
    T\frac{d}{dt}x_i(t)=-x_i(t) + \beta_{fb}\cos^2(x_i(t-\tau_D)+\phi_0) \\ + \sigma(t) \sum_{j=1}^2L_{ij}\cos^2(x_j(t-\tau_D)+\phi_0),
\end{split}
\end{equation}
where $T$ is the low pass filter characteristic time, $\beta_{fb}$ is the round trip gain, $\sigma$ is the coupling strength, $L$ is the Laplacian coupling matrix, and $\phi_0 = \pi/4$. In this work, $L_{ij}$ = 1 for $i\neq j$ and $L_{ij}=-1$ for $i=j$. While opto-electronic oscillators can display a wide variety of dynamics \cite{murphy2010complex,chembo2019optoelectronic}, we tune our opto-electronic oscillators such that an uncoupled oscillator displays high dimensional chaotic dynamics by selecting $\beta_{fb}=4.0$, $\tau=500\mu s$, and $T=15.9\mu s$.


\begin{figure}
    \centering
    \includegraphics[width=0.8\linewidth]{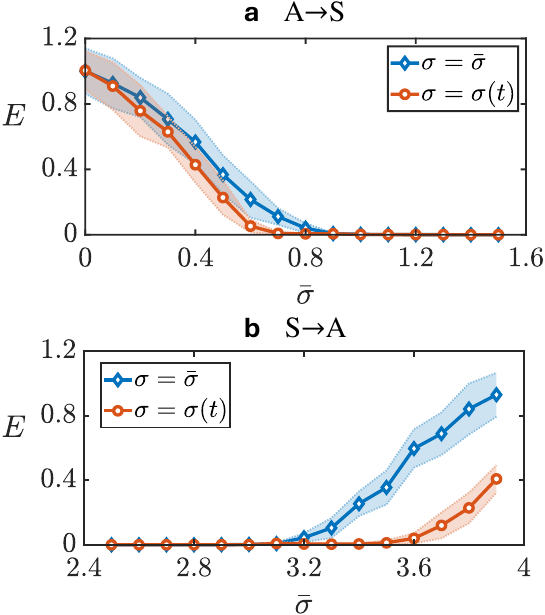} 
    \caption{
    \textbf{Demonstration of reactivity-based coupling scheme on an experimental network of two coupled opto-electronic oscillators.} The synchronization error $E$ is plotted against the average coupling strength $\Bar{\sigma}$ for the transitions from asynchrony to synchrony in panel \textbf{a} and synchrony to asynchrony in panel \textbf{b}. The blue (red) red curves show the constant (time-varying) coupling strategies. The shading shows the standard deviation.}
    \label{fig:expResults}
\end{figure}

First, to establish a baseline, we keep the coupling strength constant $\sigma=\bar{\sigma}$ and record the voltage applied to the modulator for each oscillator. The synchronization error between the two oscillators is shown in blue in Fig. \ref{fig:expResults}. Next, we implement the time-varying coupling scheme described in Sec.\,\ref{sec:dynamics} with $\beta=0.34$.
Although {the opto-electronic model in} Eq. \eqref{eq:OEO} is not in the standard form as Eq.\,\eqref{eq:main}, 
the dynamics of the transverse perturbations is very similar 
{the transverse dynamics corresponding to the coupled systems in Eq.\,\eqref{eq:main}.}
(see Supplementary Note 13).
The synchronization error with the time-varying coupling scheme is shown in red in Fig. \ref{fig:expResults}. In both cases, the scans over $\bar{\sigma}$ were performed ten times, and the shaded background shows the standard deviation of the measured synchronization errors. 
One can see that the minimum $\bar{\sigma}$ for A $\rightarrow$ S is reduced from 0.9 to 0.7 {($\%22$ reduction)} and the maximum $\bar{\sigma}$ for S $\rightarrow$ A is increased from 3.2 to 3.5 {($\%8.6$ increase.)} The results of the energy efficiency are presented in Supplementary Note 13.

We note that the computation of the reactivity relies on the model (Eq. \ref{eq:OEO}) and assumes that the oscillators are identical. 
These experimental results conclusively demonstrate that the time-varying coupling schemes presented in Sec.\,\ref{sec:dynamics} are robust to the imperfect model parameter estimations, non-identical oscillators, and noise that are inherently present in this experiment and in all real-world applications, and that our coupling strategy can be successfully applied to time-delayed systems.


\subsection{Network Syncreactivity} \label{sec:syncreactivity}

An important question is how 
the particular choice of the network topology affects the reactive characterization of the attractor and what we have discussed so far.
We proceed under the assumption that the particular choice of $\bF$ and $H$ corresponds to a master stability function that is negative in an unbounded range of its argument {(Class II MSF.)}
The other case in which the range is bounded {(Class III MSF)} is discussed in Supplementary Note 14.
We now want to compare two different network typologies 
in terms of the transverse reactivity $r(\bx_s)$, {which depends on $p = \sigma \xi$.}
However, for a proper comparison, it is required to pick $\sigma$  such that the long-term stability is the same for both networks. 
Given two network topologies with Laplacian matrices $L_A$ and $L_B$,
 we fix the coupling strength for each Laplacian matrix such that the long-term stability is the same,
that is,
    $\sigma_A Re(\lambda_2^A) = \sigma_B Re(\lambda_2^B) = a < 0$,
where $Re(\lambda_2^A)$ and $Re(\lambda_2^B)$ are the real part of the second eigenvalue of the Laplacian matrices $L_A$ and $L_B$, respectively.
Now, we would like to see if $\sigma_A \xi_A / \sigma_B \xi_B$ is less, equal, or larger than 1 where $\xi_A$  ($\xi_B$) is the algebraic connectivity of network A (network B), respectively.
From {section \ref{App:Isolating}, Property (ii)} for $r(\bx_s)$,
we know that $r(\bx_s)$ 
is a non-decreasing function of $\sigma \xi$.
Hence, it is higher for the Laplacian matrix $L_A$ than for the Laplacian matrix $L_B$ if $\sigma_A {\xi}_A>\sigma_B {\xi}_B$, or equivalently if the following condition is satisfied, $\xi_A/Re(\lambda_2^A) < \xi_B/Re(\lambda_2^B)$.

With this in mind, we introduce the  network syncreactivity index, 
\begin{equation} \label{eq:Xi}
    \Xi=1-\frac{\xi}{ Re(\lambda_2)},
\end{equation}
$\Xi \geq 0$ (see {section \ref{App:Isolating},} Property (i)),
\color{black}
 and note this is purely a topological measure of the network structure and reflects how reactive that network topology is.
If a network is connected and normal, then $\xi = Re(\lambda_2)$ and $\Xi=0$. 
{Note that normality is only a sufficient condition for $\Xi = 0$, not a necessary condition. 
 For example, the directed outward star, Network I in Fig.\,\ref{fig:main}\textbf{b}, has a non-normal Laplacian matrix but its index $\Xi = 0$.}
We emphasize that
 the network syncreactivity $\Xi$ is a single parameter of the network topology which is responsible for increasing/decreasing the reactive characterization of the attractor $\mathcal{C}(\mathcal{A})$. In particular, if for two networks A and B, {$\Xi^A > \Xi^B$, then $r^A(\bx_s) \geq r^B(\bx_s)$ for all  $\bx_s \in \mathcal{A}$.}

{
In Supplementary Note 15, we have investigated the effects of the syncreactivity $\Xi$ over the dynamics and} have seen that networks with higher $\Xi$ are more prone to the
 occurrence of bubbling \cite{ashwin1994bubbling}, both in terms of the number of bubbling events and of their size.

Supplementary Note 16 studies the syncreactivity $\Xi$ for two classes of synthetic directed unweighted networks: (i) Erd{\"o}s-R{\'e}nyi graphs, and (ii) scale-free graphs. We see that the syncreactivity $\Xi$ has an inverse relationship with the number of nodes, the connectivity probability of Erd{\"o}s-R{\'e}nyi networks, and the homogeneity of the degree distribution for scale-free networks.

\subsubsection{Effect of Syncreactivity Index $\Xi$ on the CWN Strategy}

In this subsection, we compare the performance of the CWN strategy for the cases of two network topologies characterized by different syncreactivity $\Xi$, namely a directed chain network and a directed star network, with $N=10$ nodes. The two $10 \times 10$ Laplacian matrices for these two networks are,
\begin{subequations} \label{eq:starchain}
    \begin{align}
        L_c & = \begin{bmatrix}
        0 & 0 & 0 & \hdots & 0 & 0 \\ 
        1 & -1 & 0 & \hdots & 0 & 0 \\ 
        0 & 1 & -1 & \hdots & 0 & 0 \\ 
        \vdots & \vdots & \vdots & \ddots & \vdots & \vdots \\ 
        0 & 0 & 0 & \hdots & -1 & 0 \\ 
        0 & 0 & 0 & \hdots & 1 & -1 \\ 
        \end{bmatrix}, \label{eq:chain} \\ 
        L_s & = \begin{bmatrix}
        0 & 0 & 0 & \hdots & 0 & 0 \\ 
        1 & -1 & 0 & \hdots & 0 & 0 \\ 
        1 & 0 & -1 & \hdots & 0 & 0 \\ 
        \vdots & \vdots & \vdots & \ddots & \vdots & \vdots \\ 
        1 & 0 & 0 & \hdots & -1 & 0 \\ 
        1 & 0 & 0 & \hdots & 0 & -1 \\ 
        \end{bmatrix},
    \end{align}
\end{subequations}
where the subscript $c$ ($s$) indicates chain (star.) From the lower triangular structure of the two Laplacian matrices $L_c$ and $L_s$, we see that their spectrum is the same, i.e., both Laplacian matrices have one $0$ eigenvalue and all the other eigenvalues are equal to  $-1$. There is however a difference in the syncreactivity $\Xi$ which results from the difference in the algebraic connectivity $\xi$. For the chain, $\xi_c = 0.1536$ and $\Xi_c = 1.1536$ while for the star, $\xi_s = -1$ and $\Xi_s = 0$. We note that both star and chain topologies have non-normal Laplacian matrices.
We use the same Lorenz settings as in Section IIB. 

Fig. \ref{fig:chainstar} shows the  \% MSF threshold A $\rightarrow$ S as the parameters $\beta$ and $\gamma$ are varied in  \eqref{eq:sigmat}. Panel \textbf{a} (\textbf{b}) contains the results for the chain (star) network topology. 
It is clear from the figures that for a fixed pair of the parameters $(\gamma , \beta)$, the performance of the CWN strategy is equal or worse for the coupled systems with the chain topology. 
Note that the best possible performance appears to be the same for both coupled systems at 1\% of the MSF threshold.
The best performance has been achieved for high values of $\beta$ and low values of $\gamma$, which corresponds to an on-off coupling strategy where the switching threshold is a high value of the reactivity of the attractor.


\begin{figure}
    \centering
    \includegraphics[width=0.75\linewidth]{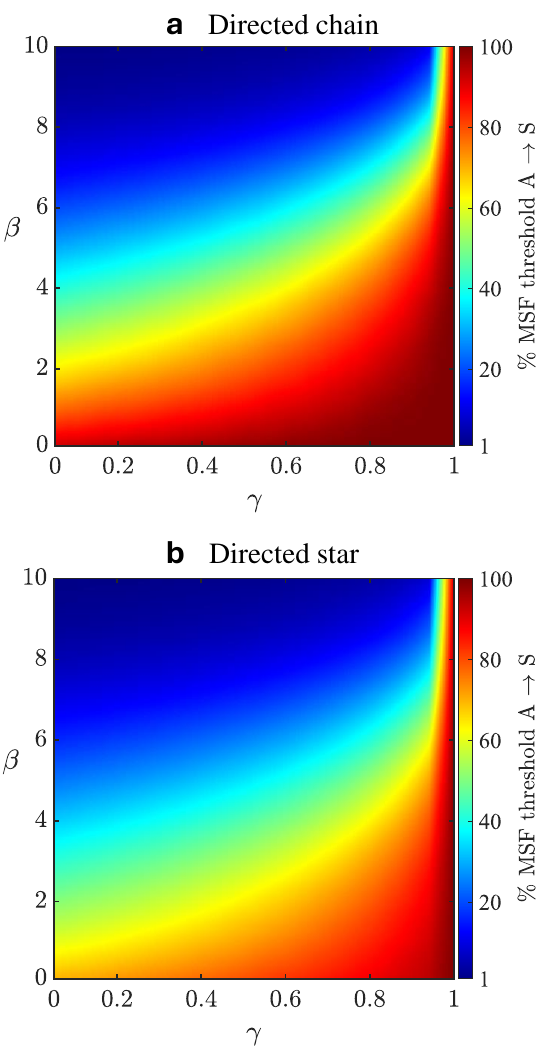}
    \caption{{
    \textbf{Effect of the syncreactivity $\Xi$ and the parameter settings on the performance of the CWN strategy.} We plot the \% MSF threshold as the parameters $\gamma$ and $\beta$ in Eq.\,\eqref{eq:sigmat} are varied. The two 10-node network topologies, {the chain in panel \textbf{a} and the star in panel \textbf{b}}, are described by the Laplacian matrices in Eq.\ \eqref{eq:starchain}, with the same spectrum but with different $\Xi$. For the chain, $\Xi_c = 1.1536$ and for the star, $\Xi_s = 0$.}}
    \label{fig:chainstar}
\end{figure}

\color{black}

\subsubsection{The syncreactivity of real networks}

Since the syncreactivity $\Xi$ is a parameter that solely depends on the structure of a network,
it is meaningful to study how this varies among different real networks from available data sets.
In what follows, for each network, we take the largest strongly connected component (LSCC) and evaluate $\Xi$ for its LSCC. 
Taking the LSCC of a network ensures that $Re(\lambda_2) \neq 0$ which guarantees synchronizability.

Figure\,\ref{fig:real} \textbf{a} plots the syncreactivity $\Xi$ of networks from different domains versus the network size $N$.
We see that on average, neural, trade, biological, and genetics networks are less reactive than social, metabolic, and file-sharing (Gnutella) networks.  
We also see that most of the more reactive networks have a larger number of nodes.
Figure\,\ref{fig:real} \textbf{b} is a plot of the syncreactivity $\Xi$ vs. the density, defined as the number of directed links in the network divided by $N^2$,  for the same set of real networks in Fig.\,\ref{fig:real} \textbf{a}. 
We see that the density correlates well with the syncreactivity, i.e., sparser (denser) networks have {higher (lower)} syncreactivity $\Xi$. 
Figure\,\ref{fig:real} \textbf{c} is a plot of the syncreactivity $\Xi$ vs. the synchronizability index $-Re(\lambda_2)$ (the larger $-Re(\lambda_2)$, the more synchronizable the network) for the same set of real networks in Fig.\ \ref{fig:real} \textbf{a}. 
Networks that are in the {bottom right} corner of the plot (e.g., neural) are more synchronizable and less reactive than those in the {top left} corner (e.g., metabolic)
and therefore they are more prone to synchronization both transiently and asymptotically. This is consistent with a conjecture that synchronization has been an evolutionary relevant principle in the formation of neural networks, but not in the formation of social, metabolic, and file-sharing networks \cite{Montgomery2023Evolutionary}.

\color{black}
In Supplementary Note 17, we have also plotted $\Xi$ {vs an index of non-normality} and other measures of synchronizability for directed networks such as 
the real-part eigenratio $Re(\lambda_2) / Re(\lambda_N)$ and the maximum imaginary part $I_{\max}$ among all eigenvalues of the Laplacian \cite{Ba:Pe02,Report}.

In Supplementary Note 18, 
\color{black}
further information on the real networks considered can be found.
{
In Supplementary Note 19, we have investigated the effect of the CWN strategy on the settling time of the synchronization dynamics. We have seen that the CWN reduces the settling time down to 8\% of the settling time with a constant coupling strategy. In Supplementary Note 20 we have considered the case of networks of phase oscillators.
}


To conclude, our analysis points out that there are at least two different purely topological indices of the ability of a network to synchronize: the synchronizability, characterizing the asymptotic synchronization dynamics, and the syncreactivity, characterizing the transient synchronization dynamics.  We argue here that when comparing different networks topologies in terms of their ability to synchronize,
both indices should be taken into account. 


\begin{figure}
    \centering
    \includegraphics[width=0.6\linewidth]{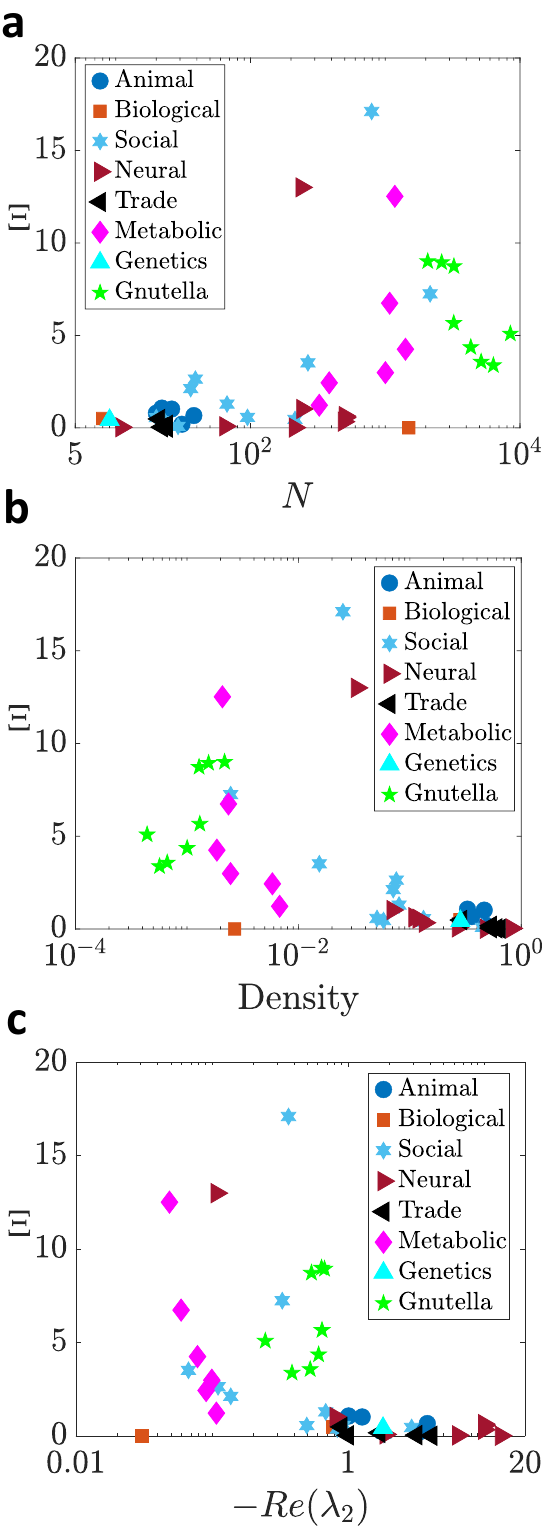}
    \caption{ \textbf{The syncreactivity of real-world complex networks.}
    The syncreactivity $\Xi$ vs \textbf{a} the size of the networks $N$, \textbf{b} the density, and \textbf{c} the negative of the real part of the second Laplacian eigenvalue $-Re(\lambda_2)$ for a collection of real-world networks from different domains.}
\label{fig:real}
\end{figure}

\section{Discussion}

Synchronization is a fundamental physical phenomenon that occurs in networks of coupled technological and biological systems. Much previous work has focused on the asymptotic stability of the synchronous solution, while this paper investigates the transient dynamics 
and explores the important question of the efficiency of the synchronization dynamics. By combining transient and asymptotic considerations, we achieve an exhaustive characterization of the synchronization dynamics of complex networks. 
This work advances the area of studies on synchronization of networks in more than one direction, as discussed below.

\textit{CWN synchronization strategy.} All oscillating systems 
move through regions of phase space that 
are different from one another: 
for example, in certain parts of an oscillation, synchronization may be possible for very little coupling or even for no coupling, while other parts may require strong coupling.
While the Lyapunov exponents provide average asymptotic measures of stability for a given attractor, {they fail at describing transient dynamical behavior. Our work supersedes the Lyapunov exponents analysis by considering a 
characterization of the reactivity of different regions of the synchronous attractor. This provides the motivation for exploring} new synchronization strategies for which the coupling strength is properly adjusted to different parts of oscillations (regions of the synchronous attractor.)  
 Our main result in this paper is the formulation of a synchronization strategy for networks of coupled oscillators that uses coupling only when needed: e.g., for the case that synchronization requires a large enough coupling strength (A $\rightarrow$ S transition), the coupling is increased when the transverse reactivity is large, and it is reduced otherwise. We showed the successful application of this strategy in simulations and experiments, and for a variety of different oscillators, including Lorenz, R{\"o}ssler, the forced Van der Pol, the Hindmarsh-Rose neuron model, the FitzHugh-Nagumo neuron model, and an opto-electronic oscillator experiment, and for different choices of the node-to-node coupling functions. We also showed that CWN provides a rigorous, general foundation for the control of extreme events such as dragon kings, which has previously been thought impossible \cite{Cavalcante2013Predictability}. 
We note that our proposed CWN strategy encompasses the traditional constant coupling and also the largely studied on-off coupling strategy. 
A gradual shift from the constant coupling strategy to on-off coupling is possible by controlling a scalar parameter between $0$ and $1$.
As a result, the CWN strategy is a powerful and versatile strategy to choose the coupling strength at each time.
The choice of the parameters is important in achieving the best efficiency of synchronization which is discussed next. 
Yet it is difficult to make a general assertion about what would be the optimal values of the parameters for all choices of dynamics, topologies, and coupling functions.

 \color{black}

\textit{Efficiency of the synchronization dynamics.}
A large part of the literature has focused on the conditions to ensure the stability of the synchronized state, while the important issue of the efficiency of the synchronization dynamics has so far received less attention. 
We investigate the issues of coupling-efficiency and energy-efficiency, which are relevant to both the biological world and technological applications.
We propose a synchronization strategy that achieves efficiency by only using coupling when needed.  {This has immediate benefits 
in terms of the actuators that can be used to achieve and maintain synchrony. In fact, 
both technological and biological systems 
are limited in the duration and overall intensity of the forces 
that they can exert and benefit from lower energy expenditures.} \textcolor{black}{Given the strong advantages we have observed in terms of both average coupling and energy expenditure, it 
appears likely that 
coupling and energy-efficient synchronization strategies 
may be implemented in the biological world. }

\textit{Enabling synchronization.}
Another motivation for this study is the observation that in several practical applications, the type of oscillators, the specific choice of the node-to-node coupling function and of the network topology cannot be changed. Hence, it is meaningful
to develop strategies to enable synchronization when it would not occur for a given type of oscillators, network topology, and node-to-node coupling. 
Our proposed synchronization strategy is exceptionally successful at synchronizing networks of coupled oscillators. The CWN method is particularly attractive because it enables energy-efficient synchronization such that the attractor of the network of synchronized oscillators is the same as the attractor of a single, uncoupled oscillator. This is in contrast to, e.g., synchronization induced by a common drive (including stochastic synchronization \cite{pikovsky1992statistics,herzel1995chaos}), in which the attractor of the synchronized network is qualitatively different from the attractor of a single, undriven oscillator due to the presence of the drive signal.
 By using our CWN strategy we were able to show a significant enlargement of the range of the average coupling strength over which synchronization arises.  
In particular,  in the case of an A $\rightarrow$ S transition (S $\rightarrow$ A transition) we could significantly reduce  (increase) the critical value of the average coupling strength over which synchronization could be established, sometimes by orders of magnitude. 
For example, in networks of Lorenz oscillators coupled in the second state variable, we achieved synchrony for an average value of the coupling strength as low as 1\% of the critical coupling strength predicted by the MSF analysis.

\textit{Network Syncreactivity.} We further introduced a new structural network property that characterizes the transient dynamics of networks towards synchronization, 
which we call network syncreactivity.  
 Several works have linked the reactivity to the non-normality of the dynamics' Jacobian.
It is known that systems characterized by a non-normal Jacobian are prone to transient effects, which may steer their long-term dynamics away from an equilibrium point, even when this is asymptotically stable \cite{trefethen1993hydrodynamic,Asllani2018Structure,asllani2018topological}. For equilibrium points, transient stability can be measured by the reactivity of the fixed point, which is defined as the initial rate of growth of a perturbation about the equilibrium point \cite{Neubert1997ALTERNATIVES,Duan2022Network}. 
{Although our work can be applied to both the cases of undirected and directed networks, it is particularly relevant to the latter, as these may have nonzero syncreactivity. }
We have found that the overall propensity of a network to synchronize can be fully described in terms of two topological scalar indices, synchronizability, and syncreactivity. 
An analysis of real complex networks from several domains has shown that typically neural networks have better transient and asymptotic synchronization properties than social, metabolic, and file-sharing networks. This is 
consistent with different evolutionary principles guiding the formation of networks from different domains.
We have also identified the density of connections to be a network topological property that well correlates with the syncreactivity while being distinct from previously introduced topological correlates \cite{Asllani2018Structure,Duan2022Network}.

{
\textit{Limitations and future directions.}
A limitation of our proposed CWN strategy is that the selection of the parameters $\beta$ and $\gamma$ requires ad-hoc tuning. {While optimization of these parameters can be nontrivial, in practice the CWN strategy is advantageous even when the parameters are not optimized. }
 {To see this, it suffices to look
at Fig. \ref{fig:sigmabar}, where all the points correspond to an
improvement in the threshold for synchronization, varying from a minimum of zero (dark red region) to 100 folds
(dark blue region.)} 
An important point of this paper is the connection between reactivity and the largest Lyapunov exponent of a system, where the former (the latter) measures the instantaneous (asymptotic) rate of growth of the state vector. Another related concept that we think deserves further investigation in relation to both reactivity and Lyapunov exponents is that of contraction \cite{lohmiller1998contraction}. } {Another point that is left for future investigation is the existence of a limit on how little average coupling can be spent and still achieve synchronization. For example, for the case of Lorenz systems presented in Fig.\ \ref{fig:sigmabar}, the smallest average coupling needed for synchronization seems to be close to 1\% of the amount needed for constant coupling. In order to characterize this limit, one would need to formulate a separate optimization problem for which the goal is to minimize the average coupling expenditure. }




\section{Methods}

\subsection{Isolating the dynamics transverse to the synchronous solution \label{App:Isolating}}

\color{black}

In order to study the stability of the dynamics of Eqs.\,\eqref{eq:main} about the synchronous solution $\bs(t)$, we linearized 
\eqref{eq:main}
about $\bs(t)$, thus obtaining,
\begin{equation} \label{eq:dx0}
\delta\dot{\bx}_i(t)=\bD\bF(\bs (t))\delta\bx_i(t)+\sigma \sum_{j = 1}^N L_{ij} H  \delta\bx_j(t),
\end{equation}
$i=1,...N$, where $\delta \bx_i(t) = \bx_i(t) - \bs (t)$ is a small perturbation and $\bD\bF(\bs (t))$ 
is the Jacobian evaluated at the synchronous solution at time $t$.
Equation\ \eqref{eq:dx0} is rewritten in the compact form as
\begin{equation} \label{eq:dx}
\delta\dot{\bX}(t)=[I_N \otimes \bD\bF(\bs (t))+\sigma  L \otimes H ] \delta\bX = Z(t) \delta\bX(t),
\end{equation}
where the $nN$-dimensional vector ${\delta\bX}^\top=  [\delta{\bx}_1^\top, \delta{\bx}_2^\top,...,\delta{\bx}_N^\top]$, $I_N$ is the $N$-dimensional identity matrix,
 and $\otimes$ denotes the Kronecker product. The first challenge, which does not arise in the study of equilibrium points, is that of decoupling the synchronous `parallel' motion from the `transverse' motion.



\color{black}

 The variational system \eqref{eq:dx} 
has a `parallel dynamics' along the direction spanned by the eigenvector $[\sqrt{N}^{-1},\sqrt{N}^{-1},...,\sqrt{N}^{-1}]^\top$ corresponding to the only zero eigenvalue $\lambda_1=0$ and a `transverse dynamics' in the subspace orthogonal to this eigenvector.  
We are especially interested in characterizing the transverse dynamics. In order to isolate this transverse dynamics, we construct an orthogonal matrix $\tilde{V}$ having its first column equal to the vector $[\sqrt{N}^{-1},\sqrt{N}^{-1}, \hdots ,\sqrt{N}^{-1}]^\top$. This can be done, for example, by using the Gram-Schmidt method. Then, we consider the similarity transformation $\tilde{L}=\tilde{V}^\top L \tilde{V}$. In the general case in which the matrix $L$ is not symmetric, the matrix $\tilde{L}$ has the following structure,
\begin{equation*}
  \tilde{L} = \left[\begin{array}{@{}c|cccc@{}}
    0 & \tilde{L}_{1,2} & \tilde{L}_{1,3} & \dots & \tilde{L}_{1,N} \\\hline
    0 & \tilde{L}_{2,2} & \tilde{L}_{2,3} & \dots & \tilde{L}_{2,N} \\
    0 & \vdots & \vdots & \ddots & \vdots \\
    0 & \tilde{L}_{N-1,2} & \tilde{L}_{N-1,3} & \dots & \tilde{L}_{N-1,N} \\
    0 & \tilde{L}_{N,2} & \tilde{L}_{N,3} & \dots & \tilde{L}_{N,N} 
  \end{array}\right],
\end{equation*}
where we call the $(N-1)$-dimensional block in the right-lower corner the reduced matrix $L^\perp$. 
Alternatively, one can retrieve $L^\perp$ by first removing the first column of the matrix $\tilde{V}$ to obtain $V$ and then
\begin{equation}
    L^\perp = V^\top L V.
\end{equation}
Note that by construction the matrix $L^\perp$ has all negative real-part eigenvalues. Applying the transformation $\tilde V$, we can then write down the equation for the time evolutions of the transverse motions corresponding to Eq.\eqref{eq:dx},
 \begin{equation} \label{deltaXX}
\delta\dot{\hat \bX}(t)=[I_{N-1} \otimes \bD\bF(\bs (t))+\sigma  L^\perp \otimes H ] \delta\hat{\bX} = \hat{Z}(\bs (t)) \delta\hat{\bX}(t).
\end{equation}

We define the 
transverse reactivity 
of the perturbations about $\bx_s$ on the synchronous solution
\begin{align} \label{eq:rt}
\begin{split}
    r(\bx_s) := &  \frac{1}{2} e_1 \left(\hat{Z}^\top(\bx_s)+ \hat{Z}(\bx_s) \right) \\
    = &  \frac{1}{2} e_1 \bigg( I_{N-1} \otimes [\bD\bF( {\bx}_s)+\bD\bF^\top( {\bx}_s)]  \\
    & \qquad + \sigma ({L^\perp}^\top + L^\perp) \otimes H \bigg).
\end{split}
\end{align}
We remark that the transverse reactivity $r(\bx_s)$ determines the reactivity associated with Eq.\,\eqref{deltaXX} at a particular point $\bx_s$ on the synchronous solution.
If $r(\bx_s) > 0$ ($r(\bx_s) < 0$), then the norm of transverse perturbations $\| \delta \hat{\bX} \|$ can (cannot) increase instantaneously.

We remark that through Eq.\ \eqref{eq:rtsimple}, the transverse reactivity depends on the parameter $p=\sigma \xi$, where $\sigma \geq 0$ is the coupling strength and $\xi$ is the algebraic connectivity.

In what follows, we will simplify Eq.\,\eqref{eq:rt} to obtain Eq.\,\eqref{eq:rtsimple}.
We write down the eigenvalue equation for the symmetric matrix $S_{L^\perp}=({L^\perp} +{L^\perp}^\top)/2$,
{
$S_{L^\perp} V_S=V_S Y$, where the columns of the orthogonal matrix $V_S$  are the eigenvectors of the matrix $S_{L^\perp}$
} 
and the matrix $Y$ is diagonal with the elements on the main diagonal equal to the eigenvalues of $S_{L^\perp}$. 
{The largest eigenvalue of the symmetric matrix $S_{L^\perp}$ is often referred to as the algebraic connectivity, and here, we denote it as $\xi = e_1(S_{L^\perp})$.}
Then, we can rewrite {$r(\bx_s)$} by pre-multiplying and post-multiplying Eq.\,\eqref{eq:rt} by 
{
$V_S^\top \otimes I$ and $V_S \otimes I$,
}
respectively, yielding,
\begin{equation} 
    r(\bx_s)= e_1(I \otimes [\bD\bF( {\bx}_s)+\bD\bF^\top( {\bx}_s)]/2+\sigma  Y \otimes H ).
\end{equation}
Because the matrix $Y$ is diagonal, then
\begin{equation} \label{sr}
 r(\bx_s)= \max_i \Bigl\{ e_1([\bD\bF( {\bx}_s)+\bD\bF^\top( {\bx}_s)]/2 +\sigma Y_{ii}  H) \Bigr\}.
\end{equation}
From Eq.\,\eqref{sr}, we also see that there are two distinct effects on the overall transverse reactivity, a baseline effect of the individual dynamics given by $[\bD \bF( {\bx}_s)+\bD \bF^\top ( {\bx}_s)]/2$ and an effect of the network topology given by $Y_{ii}$. 
The baseline effect depends on the particular choice of the function $\bF$ so that different choices of oscillators result in different baseline effects. 
In what follows we are particularly interested in the role of the network topology, so we focus on the largest eigenvalue of $S_{L^\perp}$.
Also, under the generic assumption that the matrices have simple spectra, one can show that (see \cite[Chapter 1.3.4]{tao2012topics} and \cite{horn1998eigenvalue})
\begin{equation*}
    \frac{d e_1([\bD\bF( {\bx}_s)+\bD\bF^\top( {\bx}_s)]/2 +\sigma Y_{ii} H)}{d Y_{ii}}={\pmb \pi}^\top \sigma H {\pmb \pi} \geq 0,
\end{equation*}
where ${\pmb \pi}$ is the Perron-Frobenius eigenvector (with entries all of the same sign) of the symmetric matrix $[\bD\bF( {\bx}_s)+\bD\bF^\top( {\bx}_s)]/2 +\sigma Y_{ii} H$.
We thus expect the maximum in Eq.\ \eqref{sr} to be always achieved for $i=i^*$ corresponding to {the algebraic connectivity $\xi$.}
Then, Eq.\,\eqref{eq:rt} for the  reactivity of the transverse motion is rewritten as
\begin{equation*} 
    r(\bx_s) = e_1 \left(\frac{\bD\bF( {\bx}_s)+\bD\bF^\top( {\bx}_s)}{2} +\sigma \xi H \right),  
\end{equation*}
which is the same as Eq.\,\eqref{eq:rtsimple}.

Next, we present some 
properties of the algebraic connectivity $\xi$ and of the transverse reactivity $r(\bx_s)$:
\begin{enumerate}[label=(\roman*)]
    \item $\xi \geq Re(\lambda_2)$, i.e., the algebraic connectivity $\xi$ is always greater than or equal to the real part of the second smallest eigenvalue of the Laplacian, $Re(\lambda_2)$. 
    \item For each point $\bx_s$, the transverse reactivity $r(\bx_s)$ (and so the reactive characterization of the attractor) 
   is a continuous monotonically non-decreasing function of the parameter $p=\sigma \xi$.
    \item {The transverse reactivity $r(\bx_s)$}
    is a continuous function of the synchronous solution $\bs (t)$ if
    the Jacobian $\bD \bF$
    is a continuous function of the synchronous solution.
\end{enumerate}

These properties are proved in the following sections of the Methods. 
From Property (ii) it follows that the transverse reactivity is a continuous monotonically non-decreasing function of $\xi$ for a fixed $\sigma$. 
    {For a fixed $\xi > 0$ ($\xi < 0$), the transverse reactivity is a continuous monotonically non-decreasing (non-increasing) function of $\sigma$.}

Based on 
Property (iii),
we can divide the attractor $\mathcal{A}$ into two distinct regions:
\begin{enumerate}
    \item The reactive region $\mathcal{R} = \{ \bx_s |  r(\bx_s) > 0, \ \bx_s \in \mathcal{A}\}$, and
    \item The non-reactive region $\mathcal{N} = \{ \bx_{s} |  r(\bx_{s}) \leq 0 , \ \bx_s \in \mathcal{A}\}$,
\end{enumerate}
where $\mathcal{R} \bigcap \mathcal{N} = \emptyset$, $\mathcal{R} \bigcup \mathcal{N} = \mathcal{A}$.

We note that for a given choice of the function $\bF$, the reactivity of these regions is a function of the coupling strength $\sigma$, of the algebraic connectivity $\xi$, and the node-to-node coupling matrix $H$.
Thus, if any of the aforementioned parameters change while the local dynamics $\bF$ is fixed, the reactive and non-reactive regions change too.
{
The ratio between the size of $\mathcal{R}$ and the size of $\mathcal{A}$ defines the critical probability $\mu$ of observing an increase in the norm of the transverse perturbation at the initial time. 
{For detailed definition of the critical probability $\mu$, see Methods Sec.\,\ref{sec:prob}.}
}

\color{black}


\subsection{Coupling when needed} \label{sec:cwn}

We aim to find a time-varying coupling strength $\sigma (t)$ such that a) the average coupling strength is
\begin{equation} \label{eq:int}
    \frac{1}{T}\int_0^T \sigma (t) dt = \bar{\sigma},
\end{equation}
where $T$ is the total time, and b) the coupled dynamical systems in Eq.\,\eqref{eq:timevarying} synchronize.
We propose the following simple strategy which we call `coupling when needed' (CWN),
\begin{equation} \label{eq:sigmat0}
    \sigma (t) = \begin{cases}
        \sigma_1, \quad & r(\bar{\bx}(t)) > \beta \\
        \sigma_2, & r(\bar{\bx}(t)) \leq \beta \\
    \end{cases}
\end{equation}
where  $\bar{\bx} (t) = \frac{1}{N} \sum_{i=1}^N \bx_i(t)$ is the average solution at time $t$, $\beta_{\min} < \beta < \beta_{\max}$ is a tunable parameter, between $\beta_{\min} = \min_{\bx_s \in \mathcal{A}} r(\bx_s)$ and $\beta_{\max} = \max_{\bx_s \in \mathcal{A}} r(\bx_s)$, and
\begin{equation}
    r(\bar{\bx}(t)) = e_1 \left(\frac{\bD\bF( \bar{\bx}(t))+\bD\bF^\top( \bar{\bx}(t))}{2} +\bar{\sigma} \xi H \right).
\end{equation}
Here, $\xi$ is the previously introduced algebraic connectivity of the Laplacian $L$.
We proceed to find $\sigma_1$ and $\sigma_2$ such that $\sigma_1 \geq \Bar{\sigma} \geq \sigma_2 \geq 0$ and Eq.\,\eqref{eq:int} is satisfied.

Without loss of generality, we can set $\sigma_1 = \Bar{\sigma} / \alpha$ and $\sigma_2 = \Bar{\sigma} \gamma$ where $0 < \alpha \leq 1$ and $0 \leq \gamma \leq 1$.
By enforcing the constraint in Eq.\,\eqref{eq:int}, we get
\begin{equation*} \label{eq:constraint}
    \frac{1}{T}\int_0^T \sigma (t) dt \approx \frac{1}{T} \left( \tau T \frac{\bar{\sigma}}{ \alpha} + (1-\tau)T\bar{\sigma} \gamma \right) = \Bar{\sigma}.
\end{equation*}
Here, the parameter $0 < \tau < 1$ is the fraction of the times when $r(\bar{\bx}(t)) > \beta$ and is a function of $\beta$.
After simplifications, we obtain $1/\alpha = (1-\gamma (1 - \tau)) / \tau$.
Thus, Eq.\,\eqref{eq:sigmat0} is rewritten as
\begin{equation} 
    \sigma (t) = \begin{cases}
    \Bar{\sigma} \dfrac{1-\gamma (1 - \tau)}{\tau}, \quad & r(\bar{\bx}(t)) > \beta \\
    \\
    \Bar{\sigma}\gamma , & r(\bar{\bx}(t)) \leq \beta \\
\end{cases}
\end{equation}
where $\gamma$ and $\beta$ are tunable parameters such that $0 \leq \gamma \leq 1$ and  $\beta_{\min} < \beta < \beta_{\max}$. 
If $\gamma = 1$, then $\sigma(t) = \bar{\sigma}, \ \forall t$, so the time-varying coupling strategy simplifies to the constant coupling. If $\gamma=0$ our strategy becomes on-off, similar to the work of Refs.\ \cite{belykh2004blinking,so2008synchronization,chen2009reaching,buscarino2017synchronization,PARASTESH2019Synchronizability}. 
A good approximation for $\tau$ may be calculated beforehand using a long enough pre-recorded synchronous solution $\bs(t)$, Eq.\,\eqref{eq:xss}, as
\begin{equation*}
    \tau = \frac{1}{2} + \frac{1}{2}\Big\langle \sign \Big( r(\bs(t) ) - \beta \Big)  \Big\rangle_{t}.
\end{equation*}
As long as the initial conditions of the connected systems are close to the synchronous solution, the above approximation of $\tau$ is sufficiently close to the actual probability that $r(\bar{\bx} (t)) > \beta$.

We now focus on the other case of a transition from synchrony to asynchrony (S $\rightarrow$ A transition), for which the condition for stability of the synchronous solution is that $\sigma<\sigma^{S \rightarrow A}$ (the latter is a function of $\lambda_N$).
We consider that $\bar{\sigma}$ is greater than the critical coupling $\sigma^{S \rightarrow A}$ predicted by the MSF analysis.
Hence, the system of our interest in Eq.\,\eqref{eq:timevarying} will not synchronize if $\sigma (t) = \bar{\sigma}$.

To synchronize the system under a state-dependent coupling strategy with an average value of $\bar{\sigma}$, we use the same coupling strategy in Eq.\,\eqref{eq:sigmat0} but for this case, we set $0 \leq \sigma_1 \leq \Bar{\sigma} \leq \sigma_2$.
Without loss of generality, we can take $\sigma_1 = \bar{\sigma} \alpha$ and $\sigma_2 = \bar{\sigma}/ \gamma$ where $0\leq \alpha \leq 1$ and $0<\gamma \leq 1$ are tunable parameters.
After enforcing the constraint in Eq.\,\eqref{eq:int}, we obtain
$1 / \gamma = (1 - \tau \alpha) / (1 - \tau)$,
where $\tau$ is the fraction of the times when $r(\Bar{\bx}(t)) > \beta$, as before.
Therefore, our CWN strategy for the case of an S $\rightarrow$ A transition is,
\begin{equation} 
    \sigma (t) = \begin{cases}
    \Bar{\sigma}\alpha, \quad & r(\bar{\bx}(t)) > \beta \\
    \\
    \Bar{\sigma} \dfrac{1-\tau \alpha}{1-\tau} , & r(\bar{\bx}(t)) \leq \beta \\
\end{cases}
\end{equation}
where $0 \leq \alpha \leq 1$ and $\beta_{\min} < \beta < \beta_{\max}$ are tunable parameters.

\subsection{Example details} \label{sec:exmethods}

The local dynamics $\bF$ and the coupling matrix $H$ for the case of Lorenz are 
\begin{equation} \label{eq:Lorenz}
    \bx = \begin{bmatrix}
        x \\ y \\ z
    \end{bmatrix}, \ \
    \bF(\bx) = \begin{bmatrix} 10(y-x) \\
    x(28 - z) - y \\
    xy - 2 z \end{bmatrix}, \ \  H = \begin{bmatrix}
        0 & 0 & 0 \\
        0 & 1 & 0 \\
        0 & 0 & 0 
    \end{bmatrix},
\end{equation}
which results in an unbounded range of the coupling strength for synchronization.
For the case of the R{\"o}ssler oscillator, we set
\begin{equation} \label{eq:Rossler}
    \bx = \begin{bmatrix}
        x \\ y \\ z
    \end{bmatrix}, \ \
    \bF(\bx) = \begin{bmatrix} -y-x \\
    x + 0.2 y  \\
    0.2 + (x-9)z \end{bmatrix}, \ \  H = \begin{bmatrix}
        1 & 0 & 0 \\
        0 & 0 & 0 \\
        0 & 0 & 0 
    \end{bmatrix},
\end{equation}
which results in a bounded range of the coupling strength that produces synchronization.
We randomly construct a directed unweighted graph, with Laplacian
\begin{equation} \label{eq:laplacian}
    L = \begin{bmatrix}
        -1 & 0 & 1 & 0 \\ 
        1 & -2 & 1 & 0 \\ 
        0 & 1 & -1 & 0 \\ 
        1 & 1 & 1 & -3 
    \end{bmatrix}.
\end{equation}

\color{black}

\subsection{Proof of Property (i) 
} \label{sec:proofxi}
Property (i)
follows from the fact that the largest eigenvalue of the symmetric part of a matrix is always greater than or equal to the largest real part eigenvalue of that matrix; therefore $\xi \geq Re(\lambda_2)$.
 The inequality is satisfied with the equal sign, i.e., $\xi = Re(\lambda_2)$, whenever the left and the right eigenvectors of $L^\perp$ are real and coincide. 
 The proof is complete. \hfill \qed

\subsection{Proof of Property (ii) 
\label{app:proof}}
We fix a point on the synchronous solution, $\bx_s \in \{\bs(t) \}$. Then, for an assigned Jacobian $\bD \bF (\bx_s)$ and  coupling matrix $H$, we look at the effects of varying $\sigma \xi$ on the eigenvalues of the matrix 
\begin{equation*}
    M = \frac{\bD\bF( {\bx}_s)+\bD\bF^\top( {\bx}_s)}{2} +\sigma \xi H.
\end{equation*}
As $\sigma \xi$ changes continuously, the entries of the matrix $M$ vary continuously as well.
It is well known that the eigenvalues of a matrix vary continuously with the entries of the matrix.
Therefore, $r(\bx_s)$ varies continuously with $\sigma \xi$.
Also, under the generic assumption that the matrices have simple spectra, one can show that (see \cite[Chapter 1.3.4]{tao2012topics})
\begin{equation*}
    \frac{d (r(\bx_s))}{d (\sigma \xi)} = \frac{d e_1(M)}{d (\sigma\xi)}={\pmb \pi}^\top  H {\pmb \pi} \geq 0,
\end{equation*}
where ${\pmb \pi}$ is the Perron-Frobenius eigenvector (with entries all of the same sign) of the symmetric matrix $M$.
Hence, $r(\bx_s)$ is a continuous monotonically non-decreasing function of $\sigma \xi$.
The proof is complete. \hfill \qed

\subsection{Proof of Property (iii) 
\label{sec:proofr}}
Consider the matrix
\begin{equation*}
    M = \frac{\bD\bF( {\bx}_s)+\bD\bF^\top( {\bx}_s)}{2} +\sigma \xi H
\end{equation*}
for a point on the attractor, $\bx_s \in \mathcal{A}$.
If we assume that $\bD\bF( {\bx}_s)$ is a continuous function of ${\bx}_s$, it follows the entries of $M$ are a continuous function of $\bx_s$.
Also, it is known that the eigenvalues of a matrix are continuous functions of the entries of that matrix. 
Thus, we conclude the transverse reactivity $r(\bx_s) = e_1 (M)$  is a continuous function of $\bx_s$.
The proof is complete. \hfill \qed

\subsection{Worst-case probability} \label{sec:prob}

Here we define the `worst-case' 
probability of observing an increase in the norm of the transverse perturbation at initial times by randomly selecting a point $\bx_s$ from the attractor:
\begin{equation} \label{main}
    \mu :=  \frac{1}{2} + \frac{1}{2}\Big\langle \sign \Big( r(\bx_s ) \Big)  \Big\rangle_{\mathcal{A}}.
\end{equation}
Here, $\sign(\cdot)$ is the sign function and $<\cdot>_{\mathcal{A}}$ indicates an average over the attractor $\mathcal{A}$. 
The quantity $0 \leq \mu \leq 1$ measures the fraction of the points on the attractor that result in $r(\bx_s) > 0$ for some values of $\sigma$ and $\xi$. 
The term `worst-case' refers to 
the worst possible choice of the initial condition $\delta\hat{\bX}(0)$ for Eq.\,\eqref{deltaXX}, which 
is a scalar multiple of the eigenvector corresponding to the largest eigenvalue of the matrix $(\hat{Z}(\bx_s) + \hat{Z}(\bx_s)^\top)/2$.
However, if $\delta\hat{\bX}(0)$ is chosen randomly, the initial condition will have a nonzero component along this eigenvector with probability one.
Hence, by defining $\mu$ as above, we now can provide a  
probability that an increase in $\| \delta\hat{\bX}(t) \|$ will be typically seen at the initial time.
%
\begin{proposition} \label{prop:mu}
The worst-case probability $\mu$ is a monotonically non-decreasing function of $p=\sigma \xi$.
\end{proposition}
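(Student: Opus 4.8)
The plan is to reduce the statement to Property (ii), which is already proved. Recall that $\mu$ is defined in Eq.\ \eqref{main} as
\begin{equation*}
    \mu = \frac{1}{2} + \frac{1}{2}\Big\langle \sign\big(r(\bx_s)\big)\Big\rangle_{\mathcal{A}},
\end{equation*}
and that $\langle \cdot \rangle_{\mathcal{A}}$ is an average over the attractor, so $\mu$ equals the fraction of points $\bx_s \in \mathcal{A}$ for which $r(\bx_s) > 0$ (assuming the level set $\{r(\bx_s)=0\}$ has measure zero with respect to the invariant measure on $\mathcal{A}$, which holds generically by Property (iii), continuity of $r$ in $\bx_s$). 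Since $r(\bx_s)$ depends on the topology and coupling only through the scalar $p = \sigma\xi$ (via Eq.\ \eqref{eq:rtsimple}), we may write $r(\bx_s) = r_p(\bx_s)$ and correspondingly $\mu = \mu(p)$.

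The key step is then: for each fixed $\bx_s$, the map $p \mapsto r_p(\bx_s)$ is monotonically non-decreasing — this is exactly Property (ii). Hence if $p_1 \le p_2$, then $r_{p_1}(\bx_s) \le r_{p_2}(\bx_s)$ for every $\bx_s$. In particular, whenever $r_{p_1}(\bx_s) > 0$ we also have $r_{p_2}(\bx_s) > 0$, so the reactive region $\mathcal{R}_{p_1} = \{\bx_s : r_{p_1}(\bx_s) > 0\}$ is contained in $\mathcal{R}_{p_2}$. Taking the invariant measure of both sides gives $\mu(p_1) \le \mu(p_2)$, which is the claim. Equivalently, one can argue pointwise: $\sign(r_{p_1}(\bx_s)) \le \sign(r_{p_2}(\bx_s))$ for all $\bx_s$ because $\sign$ is non-decreasing and $r_{p_1}(\bx_s) \le r_{p_2}(\bx_s)$; averaging over $\mathcal{A}$ preserves the inequality, so $\langle \sign(r_{p_1})\rangle_{\mathcal{A}} \le \langle \sign(r_{p_2})\rangle_{\mathcal{A}}$, and adding $\tfrac12$ and multiplying by $\tfrac12$ yields $\mu(p_1) \le \mu(p_2)$.

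The main obstacle, such as it is, is a measure-theoretic subtlety rather than a genuine difficulty: one must make sure that shrinking/growing the set $\mathcal{R}_p$ translates correctly into monotonicity of $\mu$ despite the discontinuity of $\sign$ at $0$. Since $\sign$ is non-decreasing everywhere (including the jump at $0$), the pointwise inequality $\sign(r_{p_1}(\bx_s)) \le \sign(r_{p_2}(\bx_s))$ is valid for every $\bx_s$ without any exceptional set, so the averaging argument goes through verbatim and no appeal to the measure of $\{r_p = 0\}$ is even needed. I would present the $\sign$-monotonicity argument as the cleanest route: state that Property (ii) gives $r_{p_1}(\bx_s)\le r_{p_2}(\bx_s)$ for all $\bx_s\in\mathcal{A}$, note that $\sign$ is non-decreasing, conclude the pointwise inequality on sign, average over $\mathcal{A}$, and read off $\mu(p_1)\le\mu(p_2)$. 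This completes the proof. \hfill $\qed$
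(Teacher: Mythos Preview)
Your proof is correct and follows the same approach as the paper: reduce to Property (ii) (pointwise monotonicity of $r_p(\bx_s)$ in $p$) and then pass to $\mu$ by monotonicity of $\sign$ and averaging. If anything, your version is more careful than the paper's one-line composition argument, since you make explicit the pointwise inequality and the averaging step rather than loosely treating $\mu$ as a function of the scalar $r(\bx_s)$.
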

\begin{proof}
    Since $\mu$ is a non-decreasing continuous function of $r(\bx_s)$ and $r(\bx_s)$ is a non-decreasing continuous function of $p = \sigma \xi$, we conclude $\mu$ is a non-decreasing continuous function of $p $. The proof is complete.
\end{proof}
{
Following the same steps in Sec.\,\ref{sec:syncreactivity}, it follows from Proposition \ref{prop:mu} that for two networks $A$ and $B$, if the syncreactivity $\Xi^A > \Xi^B$, then the critical probability $\mu^A \geq \mu^B$.
}

\section*{Data availability}
All data generated or analyzed during this study are included in this published article (and its supplementary information files).

\section*{Code availability}
The source code for the numerical simulations presented in the paper will be made available upon request, as the code is not required to support the main results reported in the manuscript.

 \newcommand{\noop}[1]{}

\section*{Acknowledgements}
We acknowledge support from grants AFOSR FA9550-24-1-0214 and Oak Ridge National Laboratory 006321-00001A. The authors are grateful to Chad Nathe for the work he performed on an early version of this paper and 
to Marco Storace for the thorough and generous feedback he has provided on this paper.
\section*{Author Contributions}
Amirhossein Nazerian worked on the theory and numerical simulations. Joe Hart worked on the experimental system, as well as on some aspects of the theory. Matteo Lodi worked on the simulations and the figures. Francesco Sorrentino worked on the theory and supervised the research. 
All authors contributed to writing the paper.

\section*{Competing Interests Statement}
The authors declare no competing interests

\end{document}